
\documentclass[letterpaper, 10 pt, journal]{ieeetran}  







\usepackage{graphics} 
\usepackage{epsfig} 
\usepackage{mathptmx} 
\usepackage{times} 
\usepackage{amsmath} 
\usepackage{amssymb}  
\usepackage{graphicx} 
\usepackage{stackrel}  
\usepackage{color}    %
\usepackage{epstopdf} 
\usepackage{cite}
\usepackage{mathrsfs}
\usepackage{latexsym}
\usepackage{amsthm}
\usepackage{amsfonts}
\usepackage{subfig}
\usepackage{lipsum}
\usepackage{cases}   
\usepackage{float}
\usepackage{arydshln}
\usepackage[pagebackref=false,colorlinks,linkcolor=blue,citecolor=magenta]{hyperref}  

\title{\LARGE \bf
  Bearing and Distance Formation Control of Rigid Bodies in $SE(3)$ with Bearing and Distance Constraints
}

\author{Sara Mansourinasab$^{1}$, Mahdi Sojoodi$^{2}$ and S. Reza Moghadasi$^{3}$
	\thanks{$^{1}$Sara Mansourinasab is with the Department of Electrical and Computer Engineering, Tarbiat Modares University, Tehran, Iran 
		{\tt\small sara.mansouri@modares.ac.ir}}%
	\thanks{$^{2}$Mahdi Sojoodi is with the Department of Electrical and Computer Engineering, Tarbiat Modares University, Tehran, Iran
		{\tt\small sojoodi@modares.ac.ir}}%
	\thanks{$^{3}$S. Reza Moghadasi is with the Department of Mathematical Sciences, Sharif University of Technology, Tehran, Iran
		{\tt\small moghadasi@sharif.ir}}%
}

\begin{document}

\maketitle
\thispagestyle{empty}
\pagestyle{empty}

\begin{abstract}
	Rigidity of the interaction graph is a fundamental condition for achieving the desired formation which can be defined in terms of distance or bearing constraints between agents. In this paper, for reaching a unique formation with the same scaling and orientation as the target formation, both distance and bearing constraints are considered for defining the desired formation. Besides, both distance and bearing measurements are also available. Each agent is able to gather the measurements with respect to other agents in its own body frame. So, the agents are coordinated-free concerning a global reference frame. On the other hand, the framework is embedded in $SE(3)$. The control signal is designed based on a gradient descent method by introducing a cost function. Firstly, the formation problem is considered for bearing-only constraints in $SE(3)$ configuration. Then, the formation control is expressed for the general case of both bearing and distance constraints. Furthermore, the essential conditions that guarantee reaching the desired formation is discussed. Finally, the validity of the proposed formation control is verified by numerical simulations.
\end{abstract}
\section{Introduction}
Formation control is an actively studied strategy in analyzing multi agents systems \cite{oh2015survey,liu2020robust}. In formation control, the type of available data that agents have access plays an important role in designing the control strategy. Based on sensing capabilities for measuring the relative positions between agents, formation control methods are categorized into three types of position-based, displacement-based, and distance-based. The two former methods are based on global distance measuring which necessitates the use of global positioning system (GPS), for example. Then, each agent shares the information with other agents through wireless communications \cite{dorfler2009formation}. Using GPS is not always reliable since limitations such as high dependence of measurements accuracy on the number of available satellites, environmental challenges like indoors, deep urban canyons, dense vegetation and cloud cover obstruct line-of-sight to the satellite \cite{ragothaman2021urban}.
 In distance-based method, the distances between agents are controlled to achieve the desired formation. In this method, the sensing capability is only defined with respect to local coordinate systems or body frame of agents and no GPS is required. Local coordinates formation control procedures include distance-based and bearing-based methods depending on the type of measuring and constraint parameters. 
 Formation rigidity theory helps to achieve the desired formation up to scaling, translation and coordinated rotation factors by its inter-agent bearings and distances. According to the definitions, in a given system connected by flexible linkages and hinges, rigidity has addressed the amount of stiffness of the given framework to an induced deformation \cite{michieletto2016bearing}. In a rigid formation, converging to the desired formation can be achieved by distance and/or bearing measurements. Besides, the desired formation can be defined based on distance and/or bearing constraints between any pair of agents. These types of controllers are classified as distance/bearing-based controller. In distance-based formation  control, each agent has its own body frame which does not need to be aligned with other agents body frame orientation.

Distance-based rigidity problem has been studied in many investigations \cite{anderson2008rigid,kyo2011formation,liang2016distance}. In most of them, the target formation constraints are only defined as inter-agent distances. In bearing-based theory, the desired formation constraints are defined in terms of the direction of the neighboring agents. In other words, the direction of the unit vector aligned the edge that connecting two agents is considered as inter-agent bearing between those two agents. \cite{sun2021vision,zhao2019bearing} have extensively studied different multi-agent strategies based on bearing rigidity theory for converging to the desired formation, generally assume each agent is able to measure bearings and distances of the nearest neighbors. Approaches with only  bearing measurements capability are proposed as bearing-only formation control problems. Vision-based devices are suitable for bearing measurements. Optical cameras are one of low-cost and light-weight onboard sensors that are bearing-only sensors. This yields the resulting formation to be with a different scale of the desired formation. 

In this paper, formation control for a multi-rigid body system is proposed considering bearing measurements and constraints. Since the motion behavior of a rigid body is a combination of translation and rotation movements, the notion of bearing rigidity comes into play for synchronized target formations. As an illustrative example, multi-satellite systems that are orbiting around the Earth, should have a specific orientation to cover a specific region on Earth. As a result, in multi-rigid body problems, using bearing rigidity concept is beneficial, more accurate and low-cost. Geometrically speaking, roto-translational motion of a rigid body represents an $SE(3)$ state space configuration which is a Lie group. Using bearing measurements for analyzing such systems is compatible with geometry of the problem \cite{zelazo2014rigidity}.

In bearing rigidity problems, the  only bearing preserving motions subtend translation and scaling of the entire system \cite{zhao2016bearingalmost}. However, distance preserving motions in distance-based problems include roto-translational motions. It is clear that the rotation of the resulting formation in bearing-based constraints problem will change the inter-agent bearings. On the other hand, the configuration space of the problem is embedded in Special Euclidean space $SE(3)$. Considering the geometry and the nature of the $SE(3)$ manifold enables the rotation of the entire framework. This result in a synchronous rotation of all agents around their body attached framework with the same angular velocity that is coupled with the entire graph\cite{michieletto2016bearing}.\\
\indent It can be concluded that in analysing bearing rigidity problem on $\mathbb{R}^3$, the only admissible infinitesimally rigid motions include translation and scaling of the framework. However, for analysing the problem on $ SE(3)$, the only admissible infinitesimally rigid motions include translation, scaling and coordinate rotation. besides, in distance preserving problems, the only admissible infinitesimally rigid motions include translation and coordinate rotation of the framework. Due to this reason, in bearing-only formation problem on $ \mathbb{R}^3$, the graph is able to converge to the target formation but with a different scale. Furthermore, in distance-based problems, the final graph may be a rotated form of the target graph. Since in some problems which the final formation should be exactly the same as the desired formation, the combination of bearing and distance rigidity is applied. As a result, in this paper, the formation control problem on $SE(3)$ is considered in two cases of bearing constraints and bearing-distance constraints in order that the framework uniquely reach to the desired formation.\\
\indent Since implementing bearing measurements in local frame is compatible with more realistic and applicable scenarios, \cite{zelazo2015bearing} have addressed the position and orientation of a graph of rigid bodies with $SE(2)$ architecture, which restricts the results to the plane. Another local body frame scheme is found in \cite{liu2020robust} with 3 dimensional extension. However, the agents attitude rotation is limited to just one axis. The aforementioned problem is eliminated in \cite{michieletto2016bearing} by considering fully-actuated multi-agents formations with six controllable degrees of freedom.\\
\indent The bearing rigidity problem has been mostly studied in 2-dimensional spaces. The extension of this problem to 3D is formulated in \cite{zhao2016bearingalmost}. The relation and comparison between bearing and distance rigidity is described in detail. This paper defines the bearing based formation problem by bearing-only measurements and provides the bearing-only control in order to reach formation. Besides, it demonstrates under what conditions a framework can be achieved by bearing measurements with translation and scaling factors. However, this paper is confined to the Euclidean $\mathbb{R}^3$ space. \cite{michieletto2021unified} is a comprehensive reference aims at expressing bearing rigidity theory concept of a framework in $ \mathbb{R}^n$ spaces. It redefines the concept of bearing rigidity to the more complicated manifold spaces than n-dimensional Euclidean space. Although the paper demonstrates the main notions of bearing rigidity theory in $SO(n)$ and $SE(n)$ which has been cited by the current paper, it is deprived of any controlling interpretations to steer the framework to the target formation. Therefore, in the presented paper, referring to the results of \cite{michieletto2021unified}, we present the required control strategy to convey the agents  to the desired formation in $SE(3)$ space. This strategy is investigated on two cases of bearing-only and mixed bearing-distance desired constraints.

The paper is organized as follows. Section II is dedicated to review the required elements of graph and bearing rigidity theory. The initial problem formulation for rigid body agents  is defined in the second part of this section. In section III, the desired formation is considered as bearing-only constraints and the control law is designed based on relative bearing measurements in local body frame to achieve the desired formation. In section IV, the formation is introduced as bearing-distance measurements in order to provide the exact shape as the desired one with no scale and rotation. In section V, the results are supported with computer simulations.
\section{Preliminaries and Problem Formulation}
\subsection{Graph Theory}
In this section, some elements of graph theory and bearing rigidity theory are briefly reviewed. A graph of $n$ agents with $m$ edges is defined as the pair $ \mathcal{G}=(\mathcal{V}, \mathcal{E})$ while $\mathcal{V}=\{v_1, ..., v_n\}$ is the vertex set and $ \mathcal{E}=\{ e_1,...,e_m \} \subseteq \mathcal{V} \times \mathcal{V}$ is the edge set. An edge $e_k=(v_i,v_j) \in \mathcal{E} $, while an information exchange such as distance measurements or bearing measurements exists between two vertices $ (i,j)$, and $k$ relates to the $k$-th directed edge in the graph. Each vertex $ v_i \in \mathcal{V}$ in the graph is associated to the point $p_i \in \mathbb{R}^3$ in the configuration, so the framework $(\mathcal{G},p) $ in $\mathbb{R}^3$ consists of the graph $ \mathcal{G}=(\mathcal{V}, \mathcal{E})$ and the configuration $p=[p_1^T ... p_n^T]^T \in \mathbb{R}^{3n}$. The neighborhood of the vertex $i$ is denoted by $ \mathcal{N}_i= \{ j \in \mathcal{V} | (i,j) \in \mathcal{E}\}$.
In a directed framework $\mathcal{G}=(\mathcal{V}, \mathcal{E})$, the incidence matrix $E \in \mathbb{R} ^{n \times m}$ is defined as
\begin{align}
[E]_{ik} = \left\{ \begin{array}{rl}
-1 &\text{if} \ e_k=(v_i,v_j) \in \mathcal{E} \ (\text{outgoing edge}) \\
1 &\text{if} \ e_k=(v_j,v_i) \in \mathcal{E} \ (\text{ingoing edge}) \\
0 &\text{otherwise} \ \ \ \ 
\end{array} \right.  \label{eq5}
\end{align}
and the matrix $ E_o \in \mathbb{R}^{n \times m} $ is introduced by
\begin{align}
[E_o]_{ik} = \left\{ \begin{array}{rl}
-1 &\text{if} \ e_k=(v_i,v_j) \in \mathcal{E} \ (\text{outgoing edge}) \\
0 &\text{otherwise}\ \ \ \ 
\end{array} \right.  \label{eq6}
\end{align}
while $ \bar{E}=E \otimes I_3 \in \mathbb{R} ^{3n \times 3m } $, $ \bar{E}_o=E_o \otimes I_3 \in \mathbb{R} ^{3n \times 3m} $, $I_3$ is the 3-dimensional identity matrix and $ \otimes$ is the Kronecker product. The orthogonal projection operator $ P : \mathbb{R}^3 \to \mathbb{R}^{3\times 3}$ is defined as
\begin{align}
	P(v)=I_3-\frac{v}{\|v\|}\frac{v^T}{\|v\|}.   \label{eq21}
\end{align}
This operator projects any non-zero vector $v \in \mathbb{R}^3$ to its orthogonal complement. $\nabla_xV$ denotes the partial differentiation of $V$ with respect to $x$. For each agent $v_i \in \mathcal{V}$, the state $ R_i \in SO(3)$ is devoted such that the matrix group $ SO(3)=\{R \in \mathbb{R}^{3 \times 3} : R^TR=I_{3 \times 3}, det(R)=1\}$ is the space of 3D rotations, where $I_{3 \times 3}$ indicates the three dimensional identity matrix. The group identity indicated by $e$, equals the identity matrix $I_{3 \times3 }$. The tangent space of $SO(3)$ in the identity element of the group, is the Lie algebra of $SO(3)$ written as $\mathfrak{so}(3)=T_e SO(3)= \{ \Omega \in \mathbb{R}^ {3 \times 3} : \Omega^T = -\Omega \} $, which is the space of all $3 \times 3$  skew symmetric matrices. The tangent space of $SO(3)$ at any group member $R$ is $ T_R SO(3)=\{RV: V \in \mathfrak{so}(3)\}$. 
There is a map between the tangent vector in Lie group $ \Omega=\hat{\omega} \in \mathfrak{so}(3)$ with a vector in $\omega \in \mathbb{R}^3$ using the following $hat (\hat{.}) $ and $vee (\check{.}) $  maps 
\begin{align}
\omega=
\left(
\begin{array}{c}
\omega_1 \cr\omega_2\cr\omega_3
\end{array}
\right)
\in \mathbb{R}^3
\stackrel[(\check{.})]{(\hat{.})}{\rightleftharpoons}
\hat{\omega}=
\left(
\begin{array}{ccc}
0&-\omega_3&\omega_2\cr\omega_3&0&-\omega_1\cr-\omega_2&\omega_1&0
\end{array}
\right)
\in \mathfrak{so}(3) \label{eq24}
\end{align}

\subsection{Problem Formulation}
Consider a team of $n$ agents in $SE(3)$. The kinematics of the agent $i$, $i\in\{1\cdots,n\}$ is expressed as
\begin{align}
\left\{ 
\begin{array}{ccc}
\dot{R}_i &= &R_i \hat{\omega}_i    \label{eq1} \\
\dot{p}_i &= &R_i v_i
\end{array}  \right.
\end{align}
where $p_i=[p_{xi}, p_{yi}, p_{zi}]^T \in \mathbb{R}^3$ is the position of the $i$-th agent and $R_i \in SO(3)$ is the rotation matrix associated with the orientation of it. Furthermore, $v_i , \omega_i \in \mathbb{R}^3$ are respectively the linear and angular velocity of agent $i$ with respect to the body frame. 
For notation convenience, we use the position and attitude of the complete configuration in the stacked form as $\chi(i)=(\chi_p(i),\chi_r(i))=(p_i,R_i) \in SE(3)$.

The relative bearing measurement between agents $i$ and $j$ associated to the edge $e_k=(v_i,v_j)$ is defined as  \\
\begin{align}
b_k=b_{ij}=R_i^T \frac{p_i-p_j}{\|p_i-p_j\|}=R_i^T\bar{p}_{ij}	\label{eq22}
\end{align}
and
\begin{align*}
\bar{p}_{ij}=d_{ij} p_{ij}, \ \ p_{ij}=p_i-p_j, \ \ d_{ij}=\frac{1}{\|p_i-p_j\|}.
\end{align*}
\textbf{Definition 1}. The bearing rigidity function for a framework  $(\mathcal{G}, \chi )$ of n-agent is expressed as the map
\begin{align}
	b_{\mathcal{G}}: SE(3)^n \to  \mathbb{S}^{2m}   \nonumber \\
	b_{\mathcal{G}}(\chi) = [ b_1^T \dots b_m^T]^T  \label{eq23} 
\end{align}
that can be written in the following compact form
\begin{align}
	b_\mathcal{G}(\chi)=\text{diag}(d_{ij} R_i^T)\bar{E}^Tp .  \label{eq7}
\end{align}
The bearing rigidity matrix in $SE(3)$ is defined as the gradient of the rigidity function as $ \text{B}_\mathcal{G}(\chi)= \nabla_{\chi} b_\mathcal{G}(\chi) \in \mathbb{R}^{3m \times 6n} $ that can be expressed as
\begin{align}
	\text{B}_\mathcal{G}(\chi) &= \left[ \frac{\partial b_{ij}}{\partial p_i} \quad \frac{ \partial b_{ij}}{\partial R_i} \right]  \nonumber\\
	& =\left[- \text{diag}(d_{ij} R_i^TP(\bar{p}_{ij})) \bar{E}^T \quad -\text{diag}(R_i^T\widehat{\bar{p}_{ij}}) \bar{E}_o^T \right]   \label{eq8}
\end{align}
such that $\widehat{\bar{p}_{ij}}$ is defined as (\ref{eq24}).

\section{Bearing Formation Control in SE(3) with bearing-only constraints}
In this section, the formation control problem in $SE(3)$ with no common reference frame is studied. The desired formation is proposed as bearing only constraints. A network of n-agent is considered with dynamics presented in (\ref{eq1}). The control strategy to transmit the agents to the desired formation is based on the gradient field of a potential function $ \phi$. 
This function is defined to describe the bearing and/or distance constraints.
For converging to the desired formation that is addressed in this section, the potential function is only introduced in terms of the bearing only constraints as
\begin{align}
\phi(\chi)=\frac{1}{2} \| b_\mathcal{G}-b_\mathcal{G}^* \|^2                        \label{eq2}
\end{align}
while $b_\mathcal{G}^*$ is the desired bearing rigidity function. The control input is considered as the negated gradient of this potential function as
\begin{align}
\binom{v_i}{\omega_i}=-k \nabla_\chi \phi   \label{eq3}
\end{align}
while
\begin{align*}
\nabla_\chi \phi=\binom{\nabla _{p_i} \phi}{\nabla _{R_i} \phi}.
\end{align*}
Whereas
\begin{align*}
	\nabla_\chi \phi &= \nabla_\chi b_\mathcal{G}^T \frac{\partial \phi}{\partial b_\mathcal{G}}\\
	&= \text{B}_\mathcal{G}^T(\chi) \frac{b_\mathcal{G}-b_\mathcal{G}^*}{\|b_\mathcal{G}-b_\mathcal{G}^*\|} =-\text{B}_\mathcal{G}^T(\chi) \frac{b_\mathcal{G}^*}{\|b_\mathcal{G}-b_\mathcal{G}^*\|} \\
	&=\left[ \text{diag}(d_{ij} R_i^TP(\bar{p}_{ij})) \bar{E}^T \quad \text{diag}(R_i^T\widehat{\bar{p}_{ij}}) \bar{E}_o^T \right]^T\\ 
	& \ \ \ \ \ \frac{b_\mathcal{G}^*}{\|b_\mathcal{G}-b_\mathcal{G}^*\|}.
\end{align*}
As a result, the local body frame control signal is expressed as
\begin{align}
	\left\{ 
	\begin{array}{ccc}
		v_i &=\sum_{(i,j) \in \mathcal{E}}^{} -d_{ij} R_i^T P(\bar{p}_{ij}) b_{ij}^*  \\
		\omega_i &=\sum_{(i,j)\in \mathcal{E}} -R_i^T \widehat{\bar{p}_{ij}} b_{ij}^*   \quad \ \label{eq4}   \\ 
	\end{array}   \right.
\end{align}
In the same way as \cite{zhao2016bearingalmost}, the following theorem is used for guaranteeing the stability of the formation.
\newtheorem{lem}{Lemma}
\begin{lem}  \label{lem1}
	The control law (\ref{eq4}) keeps the centroid $\bar{p}$ and the scale $s$ invariant 
\end{lem}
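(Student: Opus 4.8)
The plan is to substitute the control $(\ref{eq4})$ into the kinematics $(\ref{eq1})$, observe that the resulting position dynamics decouples from the attitudes, and then read off both invariances from an edge-pairing argument. Write $\bar p=\frac{1}{n}\sum_{i=1}^n p_i$ and measure the scale by $s^2=\frac{1}{n}\sum_{i=1}^n\|p_i-\bar p\|^2$ (any of the usual normalisations behaves the same way). The key observation is that, since $\dot p_i=R_i v_i$ and every term of $v_i$ in $(\ref{eq4})$ carries a leading factor $R_i^T$, the current attitude cancels through $R_iR_i^T=I_3$, so the closed loop on positions becomes
\begin{align*}
\dot p_i=-\sum_{j\in\mathcal{N}_i} d_{ij}\,P(\bar p_{ij})\,b_{ij}^* ,
\end{align*}
involving only the current bearings $\bar p_{ij}$, the current inverse distances $d_{ij}$ and the constant targets $b_{ij}^*$; in particular it does not see the rotations at all. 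Setting this decoupling up cleanly is the step I would be most careful about.

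For the centroid, I would sum the displayed identity over $i$, turning the double sum into a sum over the edge set, $\sum_i\dot p_i=-\sum_{(i,j)\in\mathcal{E}} d_{ij}\,P(\bar p_{ij})\,b_{ij}^*$, and then pair each directed edge $(i,j)$ with its reverse $(j,i)$. On such a pair one uses $\bar p_{ji}=-\bar p_{ij}$, hence $P(\bar p_{ji})=P(\bar p_{ij})$ by $(\ref{eq21})$, together with $d_{ji}=d_{ij}$ and the fact that the two bearings prescribed on an edge are opposite, $b_{ji}^*=-b_{ij}^*$; the two terms cancel, so $\sum_i\dot p_i=0$ and $\dot{\bar p}=0$.

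For the scale, since $\sum_i(p_i-\bar p)=0$ one has $\frac{d}{dt}(n s^2)=2\sum_i(p_i-\bar p)^T\dot p_i$, independently of the previous step. Substituting the reduced dynamics and again pairing $(i,j)$ with $(j,i)$, the joint contribution of a pair collapses to $-2 d_{ij}\,(p_i-p_j)^T P(\bar p_{ij})\,b_{ij}^*$, and writing $p_i-p_j=d_{ij}^{-1}\bar p_{ij}$ this equals $-2\,\bar p_{ij}^T P(\bar p_{ij})\,b_{ij}^*$, which is zero because $P(\bar p_{ij})\bar p_{ij}=0$. Hence $\frac{d}{dt}s^2=0$, so $s$ is invariant.

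I expect the main work to lie in two places. First, confirming cleanly that $R_i v_i$ is attitude-independent, so that the position loop genuinely decouples from the orientations. Second, the edge-pairing bookkeeping, whose only engine is the identity $b_{ij}^*=-b_{ji}^*$ together with the symmetry of the constraint set --- which is exactly what requires the target formation to be prescribed consistently across the graph. Both cancellations can be read structurally as the statement that a rigid translation and a uniform dilation of the configuration lie in $\ker\mathrm{B}_{\mathcal{G}}(\chi)$ --- translations because $E^T\mathbf{1}_n=0$, dilations because $P(\bar p_{ij})\bar p_{ij}=0$ --- so $\phi$ is left unchanged by translating or rescaling the framework, which is the geometric reason a bearing-based gradient law can affect neither the centroid nor the scale.
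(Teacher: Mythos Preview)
Your plan is essentially what the paper has in mind: the paper omits the proof and simply refers to \cite{zhao2016bearingalmost}, and what you have written is the Zhao--Zelazo argument transported to $SE(3)$ via the observation $R_iR_i^T=I_3$, which strips the current attitudes out of $\dot p_i$. The paper's own Theorem~1 proof (the bearing--distance analogue of this lemma) runs the same two invariances through the null-space formulation you sketch in your last paragraph---$(1_n\otimes I_3)\in\ker G_{\mathcal G}$ via $E^T\mathbf 1_n=0$ for the centroid, and $P(\bar p_{ij})\bar p_{ij}=0$ for the scale---rather than the explicit edge-pairing, so both routes are represented.

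One point to tighten: the pairing step leans on $b_{ji}^*=-b_{ij}^*$, which is automatic in the global-frame setting of \cite{zhao2016bearingalmost} but is \emph{not} immediate here, since $b_{ij}=R_i^T\bar p_{ij}$ gives $b_{ji}^*=-(R_j^*)^T R_i^*\,b_{ij}^*$. So the antisymmetry you invoke holds only after you pass to global-frame desired directions (i.e., work with $R_i^*b_{ij}^*=\bar p_{ij}^*$), or under an alignment assumption on the target attitudes. Since you already note that the structural $\ker B_{\mathcal G}$ argument delivers both invariances without any pairing, I would promote that version to the main line of the proof and keep the edge-pairing only as intuition; this is also closer to how the paper argues in Theorem~1.
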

\begin{proof}
	Due to lack of space the proof is omitted. Because of the same procedure, for more information, please refer to \cite{zhao2016bearingalmost}.
\end{proof}
\begin{figure*}[!ht]
	\centering
	\subfloat[Behavior of agents from initial value to the final formation]{\includegraphics[width=0.33\textwidth]{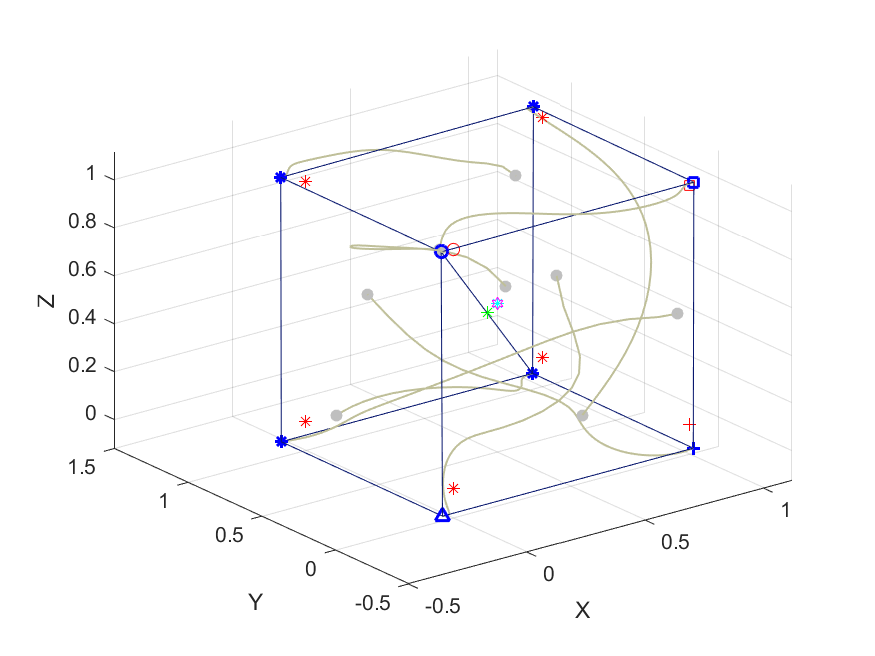}}
	\qquad 
	\qquad
	\subfloat[Error signal norm between agents bearing and the desired bearing]{\includegraphics[width=0.33\textwidth]{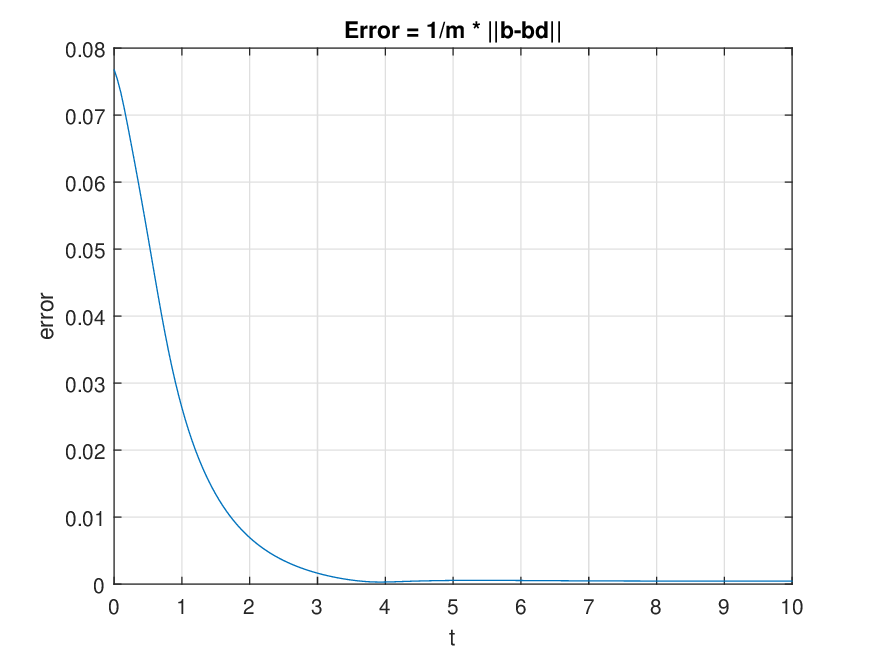}}
	\caption{Bearing-only formation and error for a network of 8 agents}
	\label{fig:1}
\end{figure*}
\section{Bearing-distance based Formation Control in $SE(3)$}
\subsection{Bearing-Distance Rigidity Theory in $SE(3)$}
In this section, the desired formation is considered in terms of both distance and bearing constraints. Besides, the control strategy requires mixed distance and bearing information. \\
Again, consider a network of $n$ rigid bodies as a multi-agent system modeled as a framework $(\mathcal{G} , \chi)$ with kinematics defined as (\ref{eq1}). We have the following definitions. \\
\textbf{Definition 2:}
Suppose the edge set $ \mathcal{E}$ in the graph $ \mathcal{G}( \mathcal{V}, \mathcal{E}) $ consists of two edge sets of $ \mathcal{E}_B$ and $ \mathcal{E}_D$ which are bearing and distance measurements respectively, such that $ \mathcal{E}_B \cup \mathcal{E}_D = \mathcal{E}$, while none of $ \mathcal{E}_D$ and $\mathcal{E}_B$ are not zero. So the network graph is defined by $ \mathcal{G}( \mathcal{V}, \mathcal{E}_B, \mathcal{E}_D) $. In the case that $ (i,j) \in \mathcal{E}_B $ and $(i,j) \in \mathcal{E}_D $, it means there exists two edges between agents $i$ and $j$, namely, both bearing and distance measurements are available between these agents. Besides, $m_b$ indicates the number of available bearing measurements between agents, and $m_d$ is the number of available distance measurements between agents. \\
\textbf{Definition 3:} Rigidity Function equals the map $F_{\mathcal{G}}$ defined as
\vspace{-0.3cm}
\begin{align*}
	F_\mathcal{G}: \mathbb{R}^{3n} & \to \mathbb{R}^{(m_d+m_b)} \\
	\chi & \mapsto F_\mathcal{G} (\chi) 
\end{align*}
while \\
$ \chi=(p,R)=\{ (p_1,R_1), \dots (p_n,R_n) \} \in (SE(3))^n$\\
$p=[ p_1^T \dots p_n^T ]^T \in \mathbb{R}^{3n} $\\
$R=[ \ R_1^T \dots R_n^T ]^T \in (SO(3))^n$.\\
Inspiring by \cite{bishop2013stabilization}, the vector $F_\mathcal{G}$ dedicates all the available bearing and distance measurements on stack form as
\vspace{-0.2cm}
\begin{align}
F_\mathcal{G}(\chi)= 
\left[
\begin{array}{c}
F_1\\
\vdots\\
F_{m_d}\\ \hdashline[2pt/2pt]
F_{m_{d+1}}\\
\vdots\\
F_{m_d+m_b}
\end{array}
\right]
 \triangleq \label{eq10}
\left[
\begin{array}{c}
\tfrac{1}{2} z_1^2\\
\vdots \\
\tfrac{1}{2} z_{m_d}^2\\ \hdashline[2pt/2pt]
b_1 \\
\vdots\\
b_{m_b}
\end{array}
\right]
= \left[ 
\begin{array}{c}
\textcolor{white}{z}\\
d_\mathcal{G} \\
\textcolor{white}{z}\\ \hdashline[2pt/2pt]
\\
b_\mathcal{G}\\
\\
\end{array}
\right]
\end{align}
with the following parameter definitions
\begin{align*}
	b_k & =b_{ij}=R_i^T \frac{p_i-p_j}{\|p_i-p_j \|}=R_i^T \bar{p}_{ij}   \\
	e_k & =p_i-p_j\\
	z_k & =z_{ij}= \|e_k\|=\|p_i-p_j\| , k=1,...,m_d\\
	d_{\mathcal{G}} & = [\tfrac{1}{2}z_1^2 \dots \tfrac{1}{2}z_{m_d}^2]^T \in 	\mathbb{R}^{m_d}\\
	b_{\mathcal{G}} & =[b_1 \dots b_{m_b}]^T \in \mathbb{R}^{3m_b}.
\end{align*}
To clarify, all parameters with $\mathcal{G}$ indices indicate that parmeter in stack form which related to the whole graph. \\
As a result, the $SE(3)$-Bearing Rigidity Matrix $R_\mathcal{G}(\chi) \in \mathbb{R}^{(m_d+3m_b) \times 6n} $ for the framework $ (\mathcal{G}, \chi)$ is defined as follows
\begin{align}
	&R_\mathcal{G}(\chi)  =\nabla_{\chi} F_{\mathcal{G}(\chi)}  \nonumber \\
	&=\left[ \nonumber
	\begin{array}{ccc;{2pt/2pt}ccc} 
	\frac{\partial F_1}{\partial p_1} & \ldots & \frac{\partial F_1}{\partial p_n} & \frac{\partial F_1}{\partial R_1} & \ldots & \frac{\partial F_1}{\partial R_1} \\ 
	\vdots & \ddots & \vdots & \vdots & \ddots & \vdots\\
	\frac{\partial F_{m_d}}{\partial p_1} & \ldots & \frac{\partial F_{m_d}}{\partial p_n} & \frac{\partial F_{m_d}}{\partial R_1} & \ldots & \frac{\partial F_{m_d}}{\partial R_n}\\ \hdashline[2pt/2pt]
   \frac{\partial F_{m_{d+1}}}{\partial p_1} & \ldots & \frac{\partial F_{m_{d+1}}}{\partial p_n} & \frac{\partial F_{m_{d+1}}}{\partial R_1} & \ldots & \frac{\partial F_{m_{d+1}}}{\partial R_n}\\
	\vdots & \ddots & \vdots & \vdots & \ddots & \vdots\\
	\frac{\partial F_{m_d+m_b}}{\partial p_1} & \ldots & \frac{\partial F_{m_d+m_b}}{\partial p_n} & \frac{\partial F_{m_d+m_b}}{\partial R_1} & \ldots & \frac{\partial F_{m_d+m_b}}{\partial R_n}\\
	\end{array}
	\right]\\
	\nonumber \\
	&=\left[  \label{eq11}
	\begin{array}{c;{2pt/2pt}c}
	D_\mathcal{G} & 0 \\ \hdashline[2pt/2pt]
	G_\mathcal{G} & K_\mathcal{G}
	\end{array}
	\right].
\end{align}
\begin{figure*}[!h]
	\begin{minipage}[c]{0.48\textwidth}
		\centering
		\includegraphics[width=0.75\textwidth]{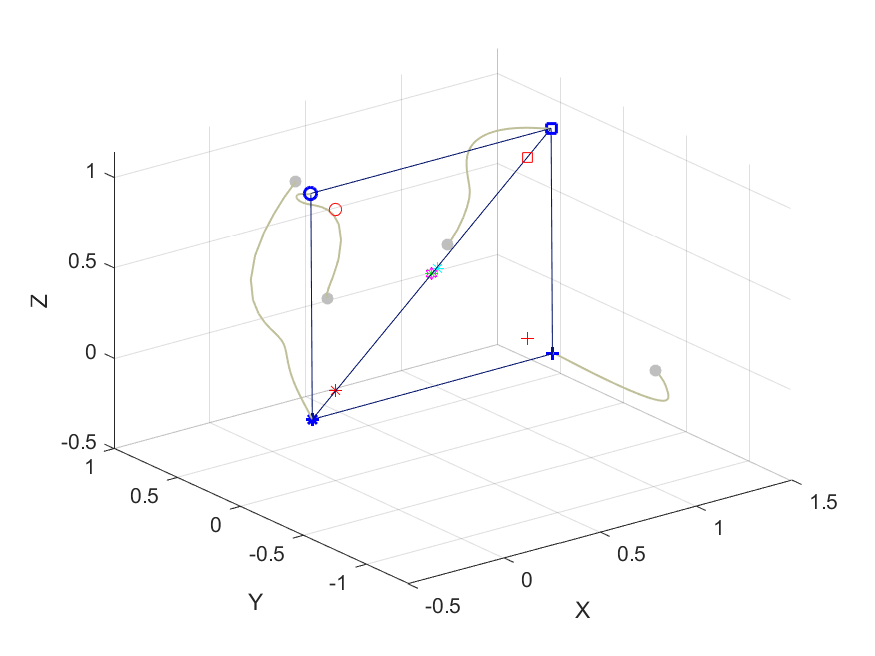}
		\caption[short text]{Behavior of agents from initial value to the final\\ formation \qquad \qquad}
		\centering
		\label{fig:2}
	\end{minipage}
	\begin{minipage}[c]{0.48\textwidth}
		\centering
		\includegraphics[width=0.75\textwidth]{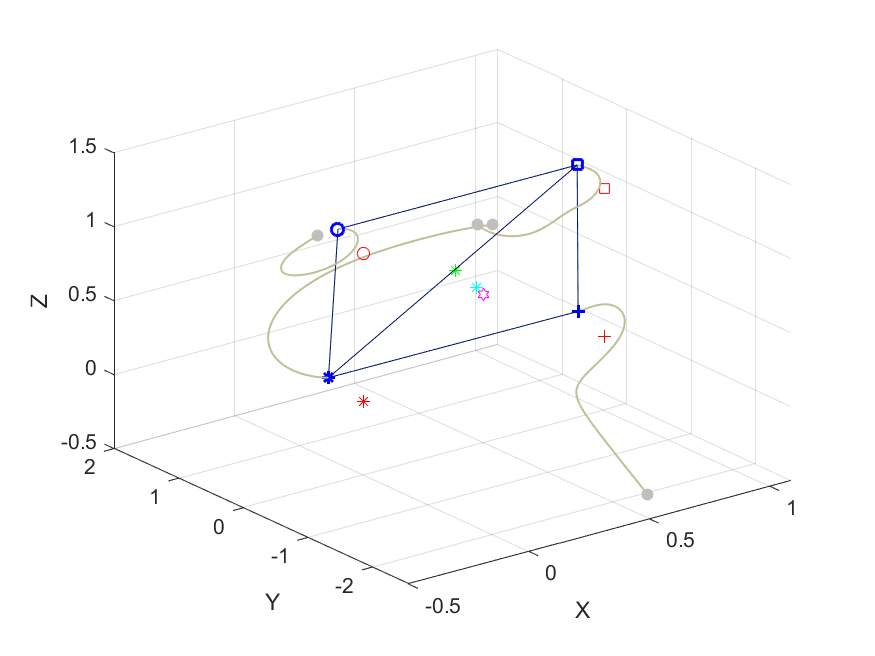}
		\caption[short text]{Formation for the case with 3-bearing and 4-distance constraints }
		\centering
		\label{fig:3}
	\end{minipage}
\end{figure*}
It can be observed that the rigidity matrix (\ref{eq11}) equals the Jacobian of the $SE(3)$-rigidity function. It has been indicated that the null-space of the $SE(3)$-rigidity matrix $R_\mathcal{G}(\chi)$ specifies the set of infinitesimal motions of the framework \cite{michieletto2016bearing}.
The above rigidity matrix is interpreted as the bearing part as $B_\mathcal{G}=[G_\mathcal{G} \  K_\mathcal{G}]$ and distance part $D_\mathcal{G}$. Calculating all parameters of the rigidity matrix $R_\mathcal{G}$ yields
\begin{align}
	D_\mathcal{G} & =J(e) \bar{E};  \label{eq12}\\
	G_\mathcal{G} & =-diag(d_{ij}R_i^TP(\bar{p}_{ij})) \bar{E}^T \label{eq13}\\
	K_\mathcal{G} & =-diag(R_i^T \hat{\bar{p}}_{ij}) \bar{E}_o^T  \label{eq14}
\end{align} 
while
\begin{align*}
	J(e)=diag(e_1^T, ..., e_{m_d}^T)
\end{align*}
and
\begin{align}
	R_\mathcal{G}= 
	\left[
	\begin{array}{c;{2pt/2pt}c}
	\\
	J(e) \bar{E} & 0\\
	\\ \hdashline[2pt/2pt]
	\\
	-diag(d_{ij}R_i^TP(\bar{p}_{ij})) \bar{E}^T & -diag(R_i^T \hat{\bar{p}}_{ij}) \bar{E}_o^T 
	\end{array}
	\right].  \label{eq15}
\end{align}
\subsection{Formation Control in $SE(3)$}
The formation control problem in SE(3) is formulated in this section to steer a network of $n$ rigid bodies with kinematics equation as (\ref{eq1}) to a desired formation consists of bearing and distance constraints.
The cost function is introduced based on both bearing and distance constraints as follows
\begin{align}
\phi(\chi)=\frac{1}{2} \| b_\mathcal{G}-b_\mathcal{G}^* \|^2+ \frac{1}{2}\|d_\mathcal{G}-d_\mathcal{G}^*\|^2
\end{align}
while $b_\mathcal{G}^*$ is the desired bearing rigidity function and $d_\mathcal{G}^*$  is the desired distance function. The control input is similarly expressed as (\ref{eq3}). As a result, the gradient of the cost function $ \phi$ can be computed using the chain rule as 
\vspace{-0.3cm}
\begin{align}
	\left[
	\begin{array}{c}
	v_i\\
	\omega_i
	\end{array}
	\right]= 
	-\nabla_\chi \phi= 
	\left[
	\begin{array}{c}
	-\nabla_{p_i} \phi\\
	-\nabla_{R_i} \phi
	\end{array}
	\right]   \label{eq17}
\end{align}
such that
\begin{align}
	\nabla_{p_i} \phi & = \nabla_{p_i} b_\mathcal{G} (b_\mathcal{G}-b_\mathcal{G}^*)+ \nabla_{p_i} d_\mathcal{G} (d_\mathcal{G}-d_\mathcal{G}^*)  \nonumber\\
	& = G_\mathcal{G} ^T (b_\mathcal{G}-b_\mathcal{G}^*) + D_\mathcal{G}^T (d_\mathcal{G}-d_\mathcal{G}^*)   \nonumber\\
	& =-(diag(d_{ij}R_i^TP(\bar{p}_{ij})) \bar{E}^T)^T (b_\mathcal{G}-b_\mathcal{G}^*) \nonumber \\ 
	& \ \ \ \ +  \bar{E}^T J(e)^T (d_\mathcal{G} - d_\mathcal{G}^*)  \nonumber \\ 
	& = (diag(d_{ij}R_i^TP(\bar{p}_{ij}))\bar{E}^T)^T b_\mathcal{G}^* +  \bar{E}^T J(e)^T (d_\mathcal{G} - d_\mathcal{G}^*)  \label{eq16} \\
	\nabla_{R_i} \phi & =\nabla_{R_i} b_\mathcal{G} (b_\mathcal{G}-b_\mathcal{G}^*)=K_\mathcal{G}^T(b_\mathcal{G}-_\mathcal{G}^*)  \nonumber \\
	& = -\bar{E}_o (diag(R_i^T \hat{\bar{p}}_{ij}))^T(b_\mathcal{G}-b_\mathcal{G}^*) \nonumber \\
	& = \bar{E}_o (diag(R_i^T \hat{\bar{p}}_{ij}))^T  b_\mathcal{G}^*. \label{eq18}
\end{align}
It should be noted that $ P(\bar{p}_{ij})R_i b_g=0$. Therefore, the resulting control law navigate the agent to the desired formation is given by (\ref{eq17}) as
\begin{align}
	\left[
	\begin{array}{c}
	v_i\\
	\omega_i
	\end{array}
	\right]= 
	\left[
		\begin{array}{c}
		\!\!\!\!\!\! -(diag(d_{ij}R_i^TP(\bar{p}_{ij}))\bar{E}^T)^T b_\mathcal{G}^* \! +  \bar{E}^T J(e)^T (d_\mathcal{G} - d_\mathcal{G}^*) \!\!\!\!\!\! \\ 
		\vspace*{-0.25cm}\\
		-\bar{E}_o (diag(R_i^T \hat{\bar{p}}_{ij}))^T  b_\mathcal{G}^*
		\end{array}
	\right].   \label{eq20}
\end{align}
\newtheorem{theorem}{Theorem}
\begin{theorem}
	Using the control input (\ref{eq20}), the centroid $\bar{p}$ and the scale $s$ are invariant. while
	\begin{align}
	\bar{p}=1/n \sum_{1=1}^{n}p_i\ \ \  , \ \ \ s=\sqrt{\frac{1}{n} \sum_{i=1}^{n} \|p_i-\bar{p} \|^2}  \label{eq19}
	\end{align}
\end{theorem}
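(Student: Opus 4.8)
The plan is to follow the same route as Lemma~\ref{lem1}: close the loop by inserting the control (\ref{eq20}) into the kinematics (\ref{eq1}) and show that $\bar p$ and $s$ have vanishing time derivative along every resulting trajectory. Three elementary ingredients do the work: $R_iR_i^{T}=I_3$ for $R_i\in SO(3)$; the incidence identity $(\mathbf{1}_n^{T}\otimes I_3)\bar{E}^{T}=0$, expressing that each edge meets its head with $+I_3$ and its tail with $-I_3$; and the projection identity $(p_i-p_j)^{T}P(\bar p_{ij})=0$, which holds because $p_i-p_j$ is collinear with $\bar p_{ij}$ while $P(\bar p_{ij})$ projects onto its orthogonal complement (this is also the identity behind the remark $P(\bar p_{ij})R_ib_{ij}=0$ used above).

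For the centroid I would differentiate $\bar p=\tfrac1n\sum_i p_i$ and substitute $\dot p_i$ from the closed loop. In the bearing block of (\ref{eq20}) the rotation carried by each $\dot p_i$ and the $R_i^{T}$ carried by the control combine to the identity, so that block collapses to a single sum over the edges of $\mathcal{E}_B$ of the vectors $d_{ij}P(\bar p_{ij})b_{ij}^{*}$, each appearing with opposite sign at its two endpoints; by the incidence identity this telescopes to $0$. The distance block $\bar{E}^{T}J(e)^{T}(d_\mathcal{G}-d_\mathcal{G}^{*})$ is annihilated on the left by $(\mathbf{1}_n^{T}\otimes I_3)$, again by the incidence identity, hence contributes $0$ as well. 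Therefore $\dot{\bar p}\equiv 0$.

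For the scale, using $\dot{\bar p}=0$ and $\sum_i\dot p_i=0$ the derivative of $s^{2}=\tfrac1n\sum_i\|p_i-\bar p\|^{2}$ reduces to $\tfrac{d}{dt}s^{2}=\tfrac2n\sum_i p_i^{T}\dot p_i$. Substituting (\ref{eq20}) edge by edge, each $(i,j)\in\mathcal{E}_B$ contributes a scalar multiple of $(p_i-p_j)^{T}P(\bar p_{ij})(\,\cdot\,)$, which vanishes by the projection identity; this is the $SE(3)$ counterpart of the computation behind Lemma~\ref{lem1} and already settles the bearing-only case. The step I expect to be the main obstacle is the distance contribution: it is not killed by $P(\bar p_{ij})$, and carrying out the edgewise sum leaves a term proportional to $\langle d_\mathcal{G},\,d_\mathcal{G}-d_\mathcal{G}^{*}\rangle$ (equivalently, $\sum_k\|e_k\|^{2}$ times the $k$-th component of $d_\mathcal{G}-d_\mathcal{G}^{*}$). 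Closing the proof requires either exhibiting a cancellation of this term — which would impose a compatibility requirement on the penalty $\phi$ and on the prescribed distances — or reading the statement as: the bearing channel keeps $\bar p$ and $s$ invariant, while the distance channel preserves $\bar p$ and drives $s$ toward its prescribed value. Settling this distance term is where the real content of the proof lies.
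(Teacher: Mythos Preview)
Your centroid argument is the same as the paper's: both annihilate the bearing block and the distance block separately using the incidence identity $(\mathbf{1}_n^{T}\otimes I_3)\bar E^{T}=0$ (the paper phrases this as $(\mathbf{1}_n^{T}\otimes I_3)\in\ker G_\mathcal{G}$ and $D_\mathcal{G}(\mathbf{1}_n\otimes I_3)=0$, which is the same fact).

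For the scale, the paper writes $s=\|p-\mathbf{1}_n\otimes\bar p\|/\sqrt n$, differentiates, and then simply \emph{asserts} $\dot p\perp p$ (hence $p^{T}\dot p=0$), together with $\dot p\perp(\mathbf{1}_n\otimes\bar p)$ from the centroid part, to conclude $\dot s=0$. No separate treatment of the distance channel is given. Your edgewise computation is the honest version of that assertion: the bearing block yields $(p_i-p_j)^{T}P(\bar p_{ij})(\cdot)=0$ exactly as in the bearing-only Lemma~\ref{lem1}, but the distance block leaves
\[
\sum_{k\in\mathcal{E}_D}\|e_k\|^{2}\bigl(\tfrac12\|e_k\|^{2}-\tfrac12\|e_k^{*}\|^{2}\bigr)
\;=\;2\,\langle d_\mathcal{G},\,d_\mathcal{G}-d_\mathcal{G}^{*}\rangle,
\]
which has no reason to vanish away from the target distances. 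So the obstacle you flag is real, and it is \emph{not} addressed in the paper's proof: the line ``It is obtained from $\dot p\perp p$'' is exactly the step where the paper's argument is incomplete. Your reading --- that the bearing channel preserves $\bar p$ and $s$ while the distance channel preserves $\bar p$ but drives $s$ toward its prescribed value --- is the correct interpretation; a distance penalty that left the scale invariant could not enforce the desired inter-agent distances in the first place. In short, you have reproduced the paper's proof for the centroid and for the bearing part of the scale, and you have located a genuine gap in the paper's scale claim rather than a gap in your own plan.
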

This theorem represents the target formation and the initial formation have the same scale and centroid.
\begin{proof}
	Centroid: It should be concluded that  $\bar{p}^*=\bar{p}(0)$.
	\begin{align*}
		\dot{\bar{p}}= &\frac{1}{n} (1_n^T \otimes I_3) diag(d_{ij} R_i^T P(\bar{p}_{ij})) \bar{E}^T b_g^* \\
		& + \frac{1}{n} (1_n^T \otimes I_3) J(e) \bar{E}(d_\mathcal{G}-d_\mathcal{G}^*).
	\end{align*}
	For the first term, as it is indicated in \cite{michieletto2021unified} that $(1_n^T \otimes I_3) \in \mathcal{S}_t=ker (G_g)$, so $(1_n^T \otimes I_3)  (R_i^T P(\bar{p}_{ij}))^T b_g^* = 0$.
	For the second term, in distance rigidity problem with rigidity matrix $D_\mathcal{G}$  using the fact that $D_\mathcal{G} (I_n \otimes I_3)=0$ results in $(I_n \otimes I_3)J(e) \bar{E}=0$.
	So, $\dot{\bar{p}}=0$ yields $ \bar{p}^*=\bar{p}(0)$. \\
	Scaling: Since (\ref{eq19}) can be written as the centroid term such that $ s=\| p-(1_n \otimes \bar{p}) \|/\sqrt{n}$, so
	\begin{align*}
		\dot{s}= \frac{1}{\sqrt{n}} \frac{ (p-(1_n\otimes\bar{p}))^T}{\| p-(1_n\otimes\bar{p}) \|}\dot{p}.
	\end{align*}
	It is obtained from $\dot{p} \perp p$ that $p^T \dot{p}=0$, and from the previous part $\dot{p} \perp (1_n \otimes \bar{p})$, thus $\dot{s}=0$ and $s^*=s(0)$.
\end{proof}
\begin{figure*}[!h]
	\begin{minipage}[c]{0.48\textwidth}
		\vspace*{0pt}
		\centering
		\includegraphics[width=0.75\textwidth]{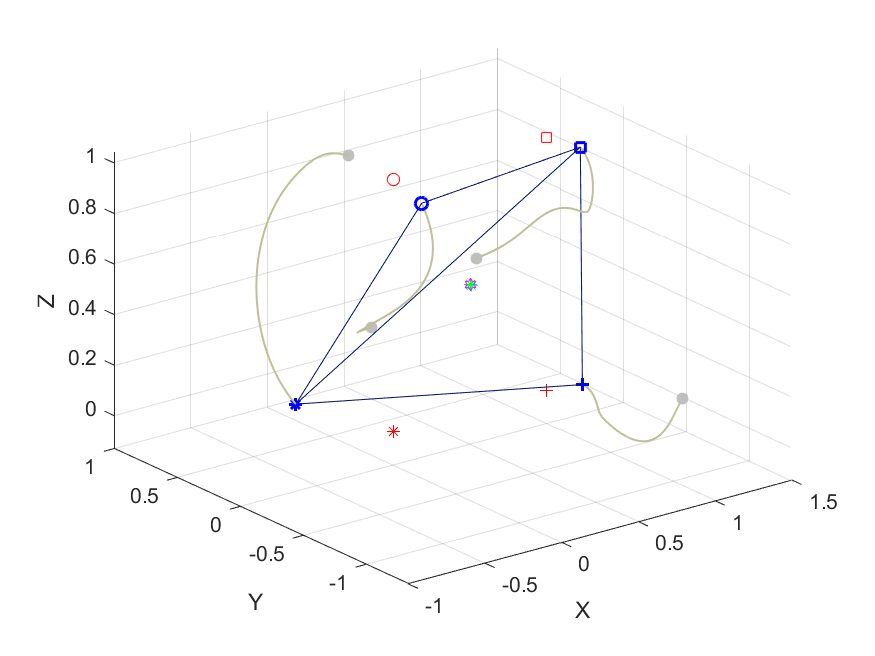}
		\centering
		\caption{Formation for the case 3-bearing and 3-distance\\ constraints}
		\label{fig:4}
	\end{minipage}
	\begin{minipage}[c]{0.48\textwidth}
		\vspace*{0pt}
		\centering
		\includegraphics[width=0.75\textwidth]{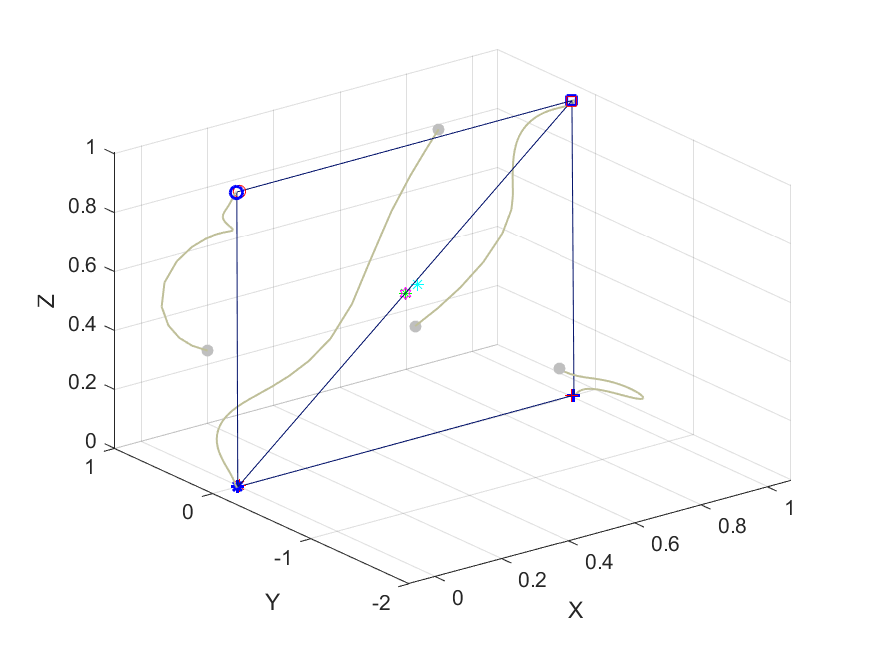}
		\centering
		\caption{Formation for the case 5-bearing and 1-distance constraints \\ }
		\label{fig:5}
	\end{minipage}
\end{figure*}
\section{simulation results}
Numerical simulations of the proposed method is presented in this section to verify the validity of the investigated control law algorithm. In the first simulation, a network of 8 agents evolves under the bearing-based control law (\ref{eq4}). The next example illustrates the mixed bearing and distance final constraints formation.\\
\textit{Example 1.} Eight agents  with bearing edges are considered as a network under (\ref{eq1}) dynamical equations that should be converged to a cubical desired formation which is defined based on bearing-only constraints, namely with the cost function considered as (\ref{eq2}). Applying the control law (\ref{eq4}) to this system results in a fast convergence to the desired formation. Fig. (\ref{fig:1}) shows the trajectories agents travel from initial to final orientation. Gray dots are the indicator of the initial configuration of agents and  gray lines are the paths they pass until reach the final formation that is marked with blue signs. Red signs depict the desired formation. It is obvious from the figure that agents reach the target bearing formation but in different scaling. It means bearing constraints are completely satisfied that is confirmed by Fig. (\ref{fig:1})-b showing the bearing error between the desired and final formation.

\textit{Example 2.} Four agents are evolved under the control law (\ref{eq4}). As it is conceived from Fig. (\ref{fig:2}), the network converges to the target bearing formation; however, as the control law is only based on bearing constraints, the final formation is in a different scaling with the desired formation. So, in this example, mixed bearing and distance constraints implemented to the system which results in an exact final formation with the same scale as the desired one. The control law (\ref{eq20}) is applied to the system of four agents with different desired mixed constraints. First of all, the target formation includes 3 bearing and 4 distance constraints. As it is seen from Fig. (\ref{fig:3}), bearing constraints are not totally satisfied. Fig. (\ref{fig:4}) shows the problem with 3 bearing and 3 distance constraints. The graph has not been succeeded in achieving convergence to the desired formation. Finally, the combination of 5 bearing, 1 distance constraints leads to the exact formation as the desired one. From Fig. (\ref{fig:5}), it is seen that the final formation is the same as the desired one and both scaling and coordinated rotation properties of the formation are strongly satisfied. Due to the lack of space, the minimum quantities of the required bearing and distance constraints under which the desired formation is obtained will be investigated in the next paper.


\section{Conclusion}
In this paper, bearing formation control problem is implemented in $SE(3)$ manifold. The control law is constructed based on cost function and gradient descent method. Then the problem is extended to the mix distance and bearing formation constraints, and conditions under which the problem has solution is considered.   

\end{document}